\newtheorem{theorem}{Theorem}[section]
\newtheorem{corollary}[theorem]{Corollary}
\newtheorem{proposition}[theorem]{Proposition}
\theoremstyle{definition}
\newtheorem{definition}[theorem]{Definition}
\newtheorem{example}[theorem]{Example}
\newtheorem{conjecture}[theorem]{Conjecture}
\newcommand{\norm}[1]{\Vert #1 \Vert}
\newcommand{\C}{\mathbb{C}}
\newcommand{\I}{\mathrm{i}}
\newcommand{\R}{\mathbb{R}}
\newcommand{\Z}{\mathbb{Z}}
\newcommand{\T}{\mathbb{T}}
\newcommand{\G}{\mathbb{G}}
\newcommand{\<}{\left<}
\renewcommand{\>}{\right>}
\newcommand{\econv}{\circledast}
\begin{document}

\title[The inverse problem of pure point diffraction]{The inverse problem of pure point diffraction -- examples and open questions}

\author{Venta Terauds}

\address{School of Mathematics and Physics, University of Tasmania, GPO Box 252C-37, Hobart, Tasmania 7001, Australia}

\email{venta@hilbert.maths.utas.edu.au}

\thanks{Much of this research was undertaken at the University of Bielefeld with the support of the RCM$^2$ and the German Research Council (DFG), via the CRC 701. The author would like to thank Michael Baake for facilitating her stay at the University of Bielefeld and for many helpful 
discussions relating to this work.}

\begin{abstract}
This paper considers some open questions related to the inverse problem of pure point diffraction, in particular, what types of objects may diffract, and which of these may exhibit the same diffraction. Some diverse objects with the same simple lattice diffraction are constructed, including a tempered distribution that is not a measure, and it is shown that there are uncountably many such objects in the diffraction solution class of any pure point diffraction measure with an infinite supporting set.
\end{abstract}

\maketitle

\section{Introduction}

Whilst diffraction has been a central tool in the analysis of physical structures since early last century, the rigorous mathematical study of diffraction in terms of measure theory originated less than 20 years ago with the work of Hof \cite{hof95}. A structure may be represented mathematically as a measure $\omega$ and its diffraction is then also a measure, this defined to be the Fourier transform of the autocorrelation of $\omega$. When $\omega$ represents a regular physical structure such as a crystal or a quasicrystal, the diffraction may always be calculated, however in general, determining for exactly which structures the diffraction exists (that is, the autocorrelation may be calculated) is non-trivial. We refer the reader to \cite{bombtay} and \cite{bms} for partial answers to this question, and to \cite{cowley} and \cite{baagrimm11} for a comprehensive overview of physical and mathematical diffraction.

Once a diffraction is known to exist, consideration turns to the type of diffraction in evidence: does it exhibit only Bragg peaks (the diffraction measure is pure point), is it a diffuse scattering (the diffraction measure is continuous), or is it a combination of these?
On the other hand, one may begin with a diffraction, that is, a positive, centrally symmetric measure $\widehat{\gamma}$, and search for a structure or structures that have diffraction $\widehat{\gamma}$. This is the inverse diffraction problem. Structures which exhibit the same diffraction are usually termed homometric; here, as in \cite{teba}, we refer to the class of objects with a given diffraction $\widehat{\gamma}$ as the diffraction solution class of $\widehat{\gamma}$.

In case the diffraction is a pure point measure, a solution to the inverse problem has recently been presented by Lenz and Moody. In a comprehensive study \cite{lenzmoo}, which forms the basis for the current paper, they introduce a new kind of generalised point process called a spatial stationary process and provide a framework for constructing all such (real) processes that have a given diffraction. Several open questions remain, however, and we attempt to shed some light on some of these here. The central question considers exactly {\em what} it is that one may construct via this framework; that is, exactly which kinds of things may diffract?

In the first section, we outline the definitions and results from \cite{lenzmoo} that are required to apply the construction method and then present some simple examples of homometric structures. We find not only measures, but also ``non-measures'' (within the class of tempered distributions) that have lattice diffraction $\delta_\Z$. Section 2 contains a couple of results about diffraction solution classes; in particular we show that for a pure point diffraction measure, having a finite supporting set is both necessary and sufficient for its diffraction solution class to contain only measure-valued processes. Further, we find that any pure point diffraction measure with infinite supporting set must have uncountably many non-measures in its diffraction solution class. 

In the final section, we restrict from the general setting of locally compact Abelian groups to $\R^d$, and then suggest a possible change to the definitions of \cite{lenzmoo} that would allow some further classes of objects to fit into the framework. This is motivated by some of our examples, along with the observation that, although we can construct spatial stationary processes based on tempered distributions, there has not to date been a consideration of what kind of tempered distributions may have a well-defined autocorrelation.

\section{The construction process and some simple examples}

The approach of \cite{lenzmoo} is centred on a kind of generalised point process termed a spatial stationary process ({\em ssp}). There are essentially two key parts to the definition of an ssp, namely a dynamical system $(X,\mu,T)$, where the elements of the space $X$ are the objects which we might usually consider to diffract and the action of $T$ on $X$ can be thought of as translation, and a map $N$, which specifies the elements of the space $X$ via their `action' on elements of the space $C_c(\G)$. This description suggests that the elements of $X$ may be thought of as measures, which is often but not always the case. In any case, a spatial stationary process is implicitly real: there is a conjugacy condition on the map $N$ that ensures that the elements of $X$ are real-valued.

\begin{definition}[Definition 2.3, \cite{lenzmoo}]\label{def ssp}
Let $\G$ be a locally compact Abelian group, $(X,\mu)$ a probability space, $T$ a $\mu$-invariant action of $\G$ on $X$ that extends to a strongly continuous unitary representation of $\G$ on $L^2(X,\mu)$ via $T_tf(x) = f((-t)x)$, and $N: C_c(\G)\to L^2(X,\mu)$ a linear $\G$-map such that for all $f\in C_c(\G)$, $N(\overline{f}) = \overline{N(f)}$. Then the quadruple $\mathcal{N}:= (N,X,\mu,T)$ is called a {\em spatial stationary process} on $\G$. A spatial stationary process is called {\em ergodic} if the eigenspace of $T$ for the eigenvalue $0$ is one dimensional and {\em full} if the algebra generated by functions of the form $\psi\circ N(f)$, $\psi\in C_c(\C)$, $f\in C_c(\G)$, is dense in $L^2(X,\mu)$. 
\end{definition}

Before giving some examples, we need to define the diffraction of a stationary spatial process. The following is from \cite[Section 3.1]{lenzmoo}.

\begin{definition}\label{def auto-mom}
A spatial stationary process $\mathcal{N} = (N,X,\mu,T)$  on $\G$ is said to possess an 
{\em autocorrelation} if it has a second moment, that is, if and only if there exists a measure
$\gamma$ on $\G$ such that $\gamma(f\ast \widetilde{g}) = \<N(f),N(g)\>$ for all $f,g\in C_c(\G)$. For such a process $\mathcal{N}$, the {\em diffraction} of $\mathcal{N}$ is the Fourier transform of $\gamma$, the measure $\widehat{\gamma}$.
\end{definition}

If an autocorrelation exists, then by definition it must be real (as the map $N$ is real), positive definite and translation bounded. Accordingly, the diffraction is necessarily a positive, centrally symmetric measure. Note that as long as the group $\G$ has a countable basis of topology, the above definition of the autocorrelation of the ssp $\mathcal{N}$ coincides with usual notion of autocorrelation for a stationary point process (see \cite[Section 3.3]{lenzmoo}). 

Lenz and Moody further show that fullness of a process implies that pure 
pointedness of its dynamical and diffraction spectrum are equivalent \cite[Theorem 6.1]{lenzmoo}; see also \cite{baalenz}. Hence, 

\begin{definition}
A full spatial stationary process that possesses an autocorrelation is called {\em pure point} if its diffraction measure is a pure point measure. 
\end{definition}

We'll now take a quick look at how some familiar objects fit into this framework. 

\begin{example}\label{egZ}
Consider the measure $\delta_{\Z}$ on $\R$. We build a dynamical system via translation of the measure: let $\omega_0 := \delta_{\Z}$ and for all $t\in\R$, define $\omega_t := \delta_t \ast \omega_0 = \delta_{\Z+t}$. The space of all these translated measures is $X:= \{\omega_t\;|\; t\in \T \}$, where we identify $[0,1)$ with the torus $\T$, and, formally, we have a group action of $\R$ on $X$ defined for $t\in\R, s\in\T$ by 
$T_t \omega_s = \delta_t \ast \omega_s = \omega_{s+t}$. 
Obviously $X$ may also be identified with $\T$ (convolution being the group operation in $X$ that corresponds to addition in $\T$), and we define the measure $\mu$ on $X$ to be normalised Haar (Lebesgue) measure. 

We then get a (strongly continuous, unitary) representation of $\R$ on $L^2(X,\mu)$, defined for $t\in\R, s\in\T$ by 
\[ T_t f (w_s) := f(w_{s-t}) = f(T_{-t}w_s)\,,\]
although, identifying $X$ with $\T$ as above, we'll mostly write $f(t)$ for $f(\omega_t)$, and then 
$T_tf(s) = f(s-t)$. 

So, we see that $(X,\mu, T)$ is a pure point, ergodic dynamical system, with dynamical spectrum $\Z$ and generalised eigenfunction $\chi_k$ for each $k\in\Z$. 
(Here, and frequently in what follows, we use $\chi_s(t) = e^{2\pi \I st}$ for $s,t\in\R$.)

Now it is left to define the map $N : C_c(\R) \to L^2(X,\mu)$. The autocorrelation of our ssp must be $\gamma = \delta_{\Z}$ as this is the autocorrelation, in the usual sense (see for example  
\cite[chapter 9]{book}), of the measure $\delta_{\Z+t}$ for all $t$. 
Then for $f,g\in C_c(\R)$, one has
\[ \gamma(f\ast\widetilde{g}) = \check{\delta}_{\Z}(\widehat{f\ast\widetilde{g}}) 
			      = \delta_{\Z}(\widehat{f}\,\,\widehat{\widetilde{g}}) 
			      = \delta_{\Z}(\widehat{f}\,\,\overline{\widehat{g}}) \,,\]
where we have used the identity $\widehat{\omega}(f) = \omega(\widehat{f})$ in order to convert the convolution of functions to multiplication, then applied the Poisson summation formula 
$\widehat{\delta}_{Z} = \delta_{Z}$ and the identity $\overline{\widehat{f}} = \widehat{\widetilde{f}}$.
Then one simply needs to observe that
\[ \sum_{k\in\Z} \widehat{f}(k)\overline{\widehat{g}}(k) 
		= \int_{0}^1 \sum_{k,\ell\in\Z} \widehat{f}(k)\chi_k(t)\overline{\widehat{g}(\ell)\chi_\ell(t)} dt \,,\]
to see that to get $\gamma(f\ast\widetilde{g}) = \<N(f),N(g)\>$ we must define 
\[ N(f)(t) := \sum_{k\in\Z} \widehat{f}(k)\chi_k(t)\,,\;\; t\in\T\,.\]
Note that here (as in \cite{lenzmoo}) we use the convention of linearity in the {\em first} variable of the inner product.

Now writing
\[ N(f)(t) =  \sum_{k\in\Z} \widehat{T_{-t}f}(k) = \delta_{\Z}(\widehat{T_{-t}f}) 
						= \delta_{\Z}(T_{-t}f) = T_t\delta_{\Z}(f)\,,\]
using the Poisson summation formula again, one sees how the value of $N(f)$ at $t$ gives exactly the action of the `$t$-th element' of $X$ on the function $f$. That is, $N(f)(t) = \omega_t(f)$.
\end{example}

\begin{example}\label{eg 3per}
Let $\alpha\in\C$ and consider the measure 
$\omega_0 := \tfrac{1}{3} ( \chi_0 + \alpha\chi_1 + \overline{\alpha}\chi_2 ) \delta_{\frac{1}{3}\Z}$ on $\R$. 
Note that the choice of the coefficient $\overline{\alpha}$ for $\chi_2$ ensures that $\omega_0$ is real. 
As above, we can form the space $X:= \{\omega_t\;|\; t\in \T \}$ of all translations of $\omega_0$, and the same definitions of $\mu$ and $T$ give a dynamical system $(X,\mu,T)$. Then $(N,X,\mu,T)$ is a spatial stationary process, where for $f\in C_c(\R)$,
\[ N(f) = \sum_{k\in3\Z} \widehat{f}(k)\chi_k 
							+ \overline{\alpha}\!\!\!\sum_{k\in3\Z+1} \widehat{f}(k)\chi_k 
							+ \alpha\!\!\!\sum_{k\in3\Z+2} \widehat{f}(k)\chi_k\,.\]
The autocorrelation of $\omega_t$ is 
$\gamma:=\tfrac{1}{3} \left( \chi_0 + |\alpha|^2 (\chi_1 + \chi_2 )\right) \delta_{\frac{1}{3} \Z}$ for all $t$, and one can easily check that indeed, $\<N(f),N(g)\> = \gamma(f\ast\widetilde{g})$, and that 
$N(f)(t) = \omega_t(f)$ for $f,g\in C_c(\R)$, $t\in\T$.
\end{example}

Of course, things are not always so neat as in these very simple examples. In fact, we shall see that it's quite easy to construct some not-quite-so-neat examples. 
To this end, we'll introduce the construction method of Lenz and Moody via a bit more framework. The following definitions and results are from Sections 7 and 8 of \cite{lenzmoo}. Note that for a group $\G$,
we use $\widehat{\G}$ to signify its dual group. 

\begin{definition}\label{def S,ep}
Let $\widehat{\gamma}$ be a pure point diffraction measure, that is, a positive, centrally symmetric pure point measure on a group $\widehat{G}$ whose Fourier transform is also a measure.
Then the {\em supporting set} of $\widehat{\gamma}$ is
\[ \mathcal{S} = \mathcal{S}(\widehat{\gamma}) := \{ k\in \G \,|\, \widehat{\gamma}(k) > 0\}\,.\]
Let $\mathcal{E}_d := \<\mathcal{S}\>$, the group generated by the elements of $\mathcal{S}$ and given the discrete topology, and let $Y:= \widehat{\mathcal{E}_d}$ be its dual. 
As $\mathcal{E}_d$ is discrete, the group $Y$ is compact, so we may define a normalised Haar measure, $\lambda_Y$, on $Y$.
\end{definition}

\begin{definition}
The {\em relator group} of a pure point diffraction measure $\widehat{\gamma}$ is the group $\mathcal{Z}$ defined as the set of all equivalence classes of tuples of elements of 
its supporting set  $\mathcal{S}$ whose components sum to $0$, that is, the set $Z/\sim$, where 
\[ Z:= \{ (k_1,\ldots,k_n) \,|\, 
						n\in\Z,\, k_i \in\mathcal{S}\; {\rm for }\; 1\leq i\leq n,\, \sum_{i=1}^n k_i = 0 \}\,,\]
and under the equivalence relation $\sim$, two tuples are equivalent if one is a permutation of the other, or if one can be obtained from the other by inserting the element $0$ or
a pair of the form $\{k,-k\}$ for some $k\in\mathcal{S}$. The group operation here is concatenation. The relator group is a subgroup of the group $\mathcal{T}$ 
of all tuples under the equivalence relation, 
that is 
\[ \mathcal{Z} \subseteq \mathcal{T} 
			:= \{(k_1,\ldots,k_n) \,|\, n\in\Z,\, k_i \in\mathcal{S}\; {\rm for }\; 1\leq i\leq n \}/\sim \,.\] 
\end{definition}

\begin{definition}
Let $\widehat{\gamma}$ be a pure point diffraction measure with supporting set $\mathcal{S}$ and relator group $\mathcal{Z}$. A {\em phase form} on $\mathcal{Z}$ is an element of the dual group of 
$\mathcal{Z}$, that is, a group homomorphism $a^*: \mathcal{Z} \to \T$. An {\em elementary phase form} is a mapping $a:\mathcal{S} \to\T$ such that $a(0) = 1$ (if $0\in\mathcal{S}$)
and $a(-k):= \overline{a(k)}$ for all $k\in\mathcal{S}$.
\end{definition}

Note that elementary phase forms are in exact correspondence with elements of the dual group of $\mathcal{T}$. For $a:\mathcal{S}\to\T$, take $a_\mathcal{T}: \mathcal{T}\to\T$
defined (on a representative of an equivalence class) by 
$a_\mathcal{T}((k_1,\ldots,k_n)):= a(k_1)\cdot\ldots\cdot a(k_n)$. Then the conditions $a(0) = 1$ and 
$a(-k)= \overline{a(k)}$ ensure that $a_\mathcal{T}$ gives the same thing for any representative of an equivalence class and is a group homomorphism. 
Conversely, for  $a_\mathcal{T}\in \widehat{\mathcal{T}}$, defining $a(k) = a_\mathcal{T}((k))$ for $k\in\mathcal{S}$ gives a map $a:\mathcal{S}\to\T$ with the required properties. 
Viewing elementary phase forms in this way allows us to recognise
each phase form $a^*: \mathcal{Z} \to \T$ as a restriction of an elementary phase form 
$a: \mathcal{T} \to \T$.

Lenz and Moody show that two elementary phase forms restrict to the same phase form if and only if their ratio is in $Y$, that is, if and only if it defines a character on $\mathcal{E}_d$.

\begin{theorem}\cite[Proposition 8.1, Theorem 9.1]{lenzmoo}\label{thm diffphase}
Each ergodic pure point stationary process $\mathcal{N}$ on a locally compact Abelian group $\G$ gives rise to a unique pair $(\widehat{\gamma},a^*)$, where $\widehat{\gamma}$ is a pure point measure (the diffraction of $\mathcal{N}$) and $a^*$ is a phase form on the relator group of $\widehat{\gamma}$. The pair $(\widehat{\gamma},a^*)$ completely characterises the process $\mathcal{N}$ (up to isomorphism).
\end{theorem}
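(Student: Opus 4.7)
The plan is to extract $\widehat{\gamma}$ and $a^*$ from $\mathcal{N}$ by a spectral analysis on $L^2(X,\mu)$, and then establish uniqueness by reconstructing a canonical model from $(\widehat{\gamma},a^*)$. Since $\mathcal{N}$ is pure point, $\widehat{\gamma}$ is simply the diffraction of $\mathcal{N}$ (Definition~\ref{def auto-mom}), and the supporting set $\mathcal{S}$ and relator group $\mathcal{Z}$ are then canonically determined. So the real content is the construction of a well-defined phase form $a^*:\mathcal{Z}\to\T$.

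Using that $\mathcal{N}$ is ergodic with pure point dynamical spectrum (via \cite[Theorem~6.1]{lenzmoo}), I would decompose $L^2(X,\mu)$ as an orthogonal direct sum of one-dimensional eigenspaces of $T$. Because products and complex conjugates of eigenfunctions are again eigenfunctions, the eigencharacters in play form a subgroup of $\widehat{\G}$, and realising $\widehat{\gamma}$ as the spectral measure of $N$ identifies this subgroup with $\mathcal{E}_d=\<\mathcal{S}\>$. For each $k\in\mathcal{S}$ pick a unit eigenfunction $c_k$, unique up to a unimodular scalar. The defining identity $\<N(f),N(g)\>=\gamma(f\ast\widetilde{g})=\sum_{k\in\mathcal{S}}\widehat{\gamma}(\{k\})\widehat{f}(k)\overline{\widehat{g}(k)}$ then forces the spectral expansion
\[ N(f) = \sum_{k\in\mathcal{S}} \sqrt{\widehat{\gamma}(\{k\})}\,\widehat{f}(k)\,a(k)\,c_k \]
for uniquely determined $a(k)\in\T$. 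The conjugation condition $N(\overline{f})=\overline{N(f)}$ combined with $\widehat{\overline{f}}(k)=\overline{\widehat{f}(-k)}$, matched against $\overline{c_k}$ being the normalised eigenfunction for $-k$, forces $a(-k)=\overline{a(k)}$; and $a(0)=1$ may be arranged by normalising $c_0=\one$, so $a$ is an elementary phase form.

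The central step is showing that $a$ restricts to a homomorphism on $\mathcal{Z}$. The multiplicativity of eigenfunctions is essential here: for $k_1,\ldots,k_n\in\mathcal{S}$ with $k_1+\cdots+k_n=0$, the product $c_{k_1}\cdots c_{k_n}$ is $T$-invariant, hence constant by ergodicity. After tracking the unimodular constants $\sigma(k,\ell)$ that appear in the reductions $c_kc_\ell = \sigma(k,\ell) c_{k+\ell}$ and showing these can be absorbed into the normalisation of the $c_k$ consistently (a standard cocycle-trivialisation argument in the compact Abelian group $Y$), the relation $a(k_1)\cdots a(k_n)=1$ emerges on every relator tuple, so $a^*:=a|_{\mathcal{Z}}$ is well defined in $\widehat{\mathcal{Z}}$. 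Independence of the choice of $c_k$ follows because replacing $c_k\mapsto\eta(k)c_k$ respecting the reality constraint and the cocycle forces $\eta$ to be a character of $\mathcal{E}_d$, i.e.\ an element of $Y$, which by the remarks preceding the theorem leaves the restriction to $\mathcal{Z}$ unchanged. I expect this cocycle analysis to be the main obstacle, as one must ensure that the relation is well posed on equivalence classes of tuples rather than on their particular representatives.

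For uniqueness up to isomorphism, given $(\widehat{\gamma},a^*)$ I would construct the canonical model with $X=Y$, $\mu=\lambda_Y$, $T$ acting on $Y$ via the evaluation pairing with $\mathcal{E}_d$, and $N$ defined by $N(f)(\omega)=\sum_{k\in\mathcal{S}}\sqrt{\widehat{\gamma}(\{k\})}\,\widehat{f}(k)\,a(k)\,\omega(k)$ for any elementary phase form $a$ lifting $a^*$. Any ergodic pure point ssp with the same pair is then equivariantly isomorphic to this model, by mapping the canonical eigenfunctions $\omega\mapsto\omega(k)$ on $Y$ to the chosen $c_k$ and extending via the algebra they generate; fullness (built into pure pointedness by \cite[Theorem~6.1]{lenzmoo}) ensures the intertwiner extends to all of $L^2(X,\mu)$.
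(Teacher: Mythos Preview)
The paper does not prove this theorem at all: it is quoted from \cite{lenzmoo} (their Proposition~8.1 and Theorem~9.1) and used as a black box. So there is no in-paper proof to compare against, and your outline is effectively a proposed reconstruction of the Lenz--Moody argument. The overall architecture you sketch---pure point spectral decomposition, selection of unit eigenfunctions $c_k$, expansion of $N(f)$ to read off unimodular coefficients $a(k)$, and a cocycle trivialisation to make the $c_k$ multiplicative---is indeed the right shape for such an argument.

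There is, however, a genuine slip at the heart of your phase-form step. You write that after absorbing the cocycle, ``the relation $a(k_1)\cdots a(k_n)=1$ emerges on every relator tuple.'' If that were true, then $a^*$ would be the trivial homomorphism for \emph{every} process, and the pair $(\widehat{\gamma},a^*)$ could not distinguish non-isomorphic processes, contradicting the very theorem you are proving. What the cocycle trivialisation actually buys you is this: once the $c_k$ are chosen multiplicatively (so $c_{k_1}\cdots c_{k_n}=c_0=\one$ on relators), the value $a(k_1)\cdots a(k_n)$ is a well-defined element of $\T$, generally not equal to $1$, and it is \emph{this} value that defines $a^*((k_1,\ldots,k_n))$. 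Independence of the multiplicative system then follows, as you correctly note afterwards, because two such systems differ by a character $\eta\in Y=\widehat{\mathcal{E}_d}$, which satisfies $\eta(k_1)\cdots\eta(k_n)=\eta(0)=1$ on relators. So your later sentence about $\eta$ is right; it is the ``$=1$'' claim for $a$ itself that needs to be replaced by ``is independent of the multiplicative choice of the $c_k$''.
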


So given a diffraction measure $\widehat{\gamma}$ and a phase form (and an lca group), we should be able to construct a process that has diffraction $\widehat{\gamma}$. Actually, one uses elementary phase forms to construct a process (and elementary phase forms that restrict to the same phase forms give isomorphic processes). Note that we use $\chi_k(t)$ to represent the action of the dual group element $k$ on the group element $t$ (so it directly generalises our earlier definition).

\begin{proposition}\cite[Proposition 10.1]{lenzmoo}\label{thm construct}
Let $\widehat{\gamma}$ be a positive, symmetric (backward transformable) measure on an lca group $\widehat{\G}$, with supporting set $\mathcal{S}$. Let $a^*$ be a phase form
and $a$ be any elementary phase form that restricts to $a^*$. Define, for $f\in C_c(\G)$, 
\[ N_a(f) := \sum_{k\in\mathcal{S}} \widehat{f}(k) a(k) \widehat{\gamma}(k)^{\frac{1}{2}} \chi_k \,.\]
Then for $Y$, $\lambda_Y$ as in Definition \ref{def S,ep}, there is a natural ergodic action of $\G$ on $Y$, where for $t\in\G$ and $y\in Y$ the element $(t.y)$ of $Y$ is defined by $(t.y)(k):= \chi_k(t)y(k)$ for all $k\in\mathcal{E}_d$. Let $T$ be the corresponding representation of $\G$ on $L^2(Y,\lambda_Y)$ 
($T_tf(y):= f((-t).y)$).
Then $\mathcal{N}_a := (N_a,Y,\lambda_a,T)$ is an ergodic, full spatial stationary process on $\widehat{\G}$ with diffraction $\widehat{\gamma}$.
\end{proposition}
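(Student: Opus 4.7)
The plan is to reduce every requirement in Definition \ref{def ssp} (together with ergodicity, fullness, and the diffraction identity) to a Fourier-series calculation on the compact dual group $Y=\widehat{\mathcal{E}_d}$. The central observation is that, since $\mathcal{E}_d$ carries the discrete topology, the characters $\{\chi_k : k\in\mathcal{E}_d\}$, viewed as the functions $y\mapsto y(k)$ on $Y$, form a complete orthonormal basis of $L^2(Y,\lambda_Y)$. This immediately gives $\norm{N_a(f)}^2 = \sum_{k\in\mathcal{S}}|\widehat{f}(k)|^2\widehat{\gamma}(k)$ (using $|a(k)|=1$), and rewriting the right-hand side as $\widehat{\gamma}(\widehat{f\ast\widetilde{f}})=\gamma(f\ast\widetilde{f})$ via the backward transformability of $\widehat{\gamma}$ shows it is finite, so that $N_a(f)\in L^2(Y,\lambda_Y)$. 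Linearity of $N_a$ is immediate, and the conjugacy $N_a(\overline{f})=\overline{N_a(f)}$ is a one-line substitution $k\mapsto -k$ combined with the symmetries $\mathcal{S}=-\mathcal{S}$, $\widehat{\gamma}(-k)=\widehat{\gamma}(k)$, $a(-k)=\overline{a(k)}$ and $\chi_{-k}=\overline{\chi_k}$.

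For $\G$-equivariance I would compare Fourier coefficients of the two sides: $\widehat{T_tf}(k)=\overline{\chi_k(t)}\widehat{f}(k)$, while the action $(t.y)(k)=\chi_k(t)y(k)$ gives $T_t\chi_k=\overline{\chi_k(t)}\chi_k$, so $T_tN_a(f)$ and $N_a(T_tf)$ match term by term; strong continuity and invariance of $\lambda_Y$ under the translation action are then automatic. The same orthonormality together with Parseval yields
\[ \<N_a(f),N_a(g)\>=\sum_{k\in\mathcal{S}}\widehat{f}(k)\overline{\widehat{g}(k)}\widehat{\gamma}(k)=\widehat{\gamma}(\widehat{f\ast\widetilde{g}})=\gamma(f\ast\widetilde{g}), \]
identifying the autocorrelation as $\gamma$ and the diffraction as $\widehat{\gamma}$. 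Ergodicity then comes for free: $T$ is diagonalised by $\{\chi_k\}$ with eigenvalues $\overline{\chi_k(t)}$, and $t\mapsto\chi_k(t)$ is the trivial character of $\G$ only when $k=0$, so the $T$-fixed subspace is one-dimensional.

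The hard part will be fullness. I would apply Stone--Weierstrass to the $*$-subalgebra $\A\subseteq C(Y)$ generated by the bounded continuous functions $\psi\circ N_a(f)$ with $\psi\in C_c(\C)$ and $f\in C_c(\G)$; closure under conjugation and the presence of the constants follow from the freedom in $\psi$. The delicate step is point separation: given distinct $y_1,y_2\in Y$, one has $y_1(k_0)\neq y_2(k_0)$ for some $k_0\in\mathcal{S}$ (because $\mathcal{E}_d=\<\mathcal{S}\>$), and one must then produce an $f\in C_c(\G)$ whose Fourier transform is concentrated sharply enough near $k_0\in\widehat{\G}$ that the single term $\widehat{f}(k_0)a(k_0)\widehat{\gamma}(k_0)^{1/2}(y_1(k_0)-y_2(k_0))$ is not cancelled by the rest of the sum. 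On a general lca group this is a standard but somewhat technical bump-function construction, invoking regularity of $\widehat{\gamma}$ to isolate the mass at $k_0$. Once such an $f$ is produced, Stone--Weierstrass gives uniform density of $\A$ in $C(Y)$ and hence $L^2$-density, completing the proof.
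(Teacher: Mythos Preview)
The paper does not give its own proof of this proposition: it is quoted verbatim as \cite[Proposition 10.1]{lenzmoo} and used as a black box, so there is nothing in the paper to compare your argument against. Your outline is the natural one and is essentially how the result is proved in the cited source: exploit that $\{\chi_k:k\in\mathcal{E}_d\}$ is an orthonormal basis of $L^2(Y,\lambda_Y)$, read off linearity, conjugacy, $\G$-equivariance and the autocorrelation identity coefficient-wise, and get ergodicity from the fact that distinct $k\in\mathcal{E}_d\subseteq\widehat{\G}$ give distinct characters of $\G$.

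One point in your fullness sketch deserves care. You propose to run Stone--Weierstrass on the subalgebra of $C(Y)$ generated by the functions $\psi\circ N_a(f)$, but $N_a(f)$ is a priori only an $L^2$-limit of the partial sums $\sum_{k}\widehat{f}(k)a(k)\widehat{\gamma}(k)^{1/2}\chi_k$, not a uniform one, so it need not define a continuous function on $Y$ and composing with $\psi\in C_c(\C)$ is not obviously meaningful pointwise. The clean fix is to first argue that the $L^2$-closure of the algebra contains every character $\chi_{k_0}$ with $k_0\in\mathcal{S}$ (this is where your ``isolate the mass at $k_0$'' idea belongs: choose $f\in C_c(\G)$ with $\widehat{f}$ concentrated near $k_0$ so that the single Fourier coefficient at $k_0$ dominates in $L^2$, then peel it off), and then use that products $\chi_{k_1}\cdots\chi_{k_n}=\chi_{k_1+\cdots+k_n}$ generate all $\chi_k$ with $k\in\mathcal{E}_d=\langle\mathcal{S}\rangle$, which is already an orthonormal basis of $L^2(Y,\lambda_Y)$. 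This avoids any appeal to continuity of $N_a(f)$ and sidesteps Stone--Weierstrass entirely. Your reduction ``$y_1\neq y_2$ in $Y$ implies $y_1(k_0)\neq y_2(k_0)$ for some $k_0\in\mathcal{S}$'' is correct, since $\mathcal{E}_d=\langle\mathcal{S}\rangle$ and $y_j$ is multiplicative.
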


As should become evident through the following examples, this construction process (or, more specifically, the construction of the elements of the space $Y$ via
this process) may be viewed, in a quite natural way, as a reverse traversal of the Wiener diagram via the lower path (see \cite{teba} for more details).

\begin{example}
We begin with diffraction measure $\widehat{\gamma} := \delta_{\Z}$ (and underlying group $\R$) and create a process $\mathcal{N}_{a} = (N_{a},Y,\lambda_Y,T)$. Now here, 
\[ \mathcal{S}:= \{ k\in\Z : \widehat{\gamma}(k) > 0 \} = \Z = \left<\mathcal{S}\right> = \mathcal{E}_d\,, \]
so that $Y = \widehat{ \mathcal{E}_d} = \T$. For each elementary phase form $a:\Z\to \T$, we have a process with 
\begin{equation}
N_{a}(f) = \sum_{k\in\mathcal{S}} \widehat{f}(k) a(k) \widehat{\gamma}(k)^{\frac{1}{2}} \chi_k 
																	= \sum_{k\in\Z} \widehat{f}(k) a(k) \chi_k \in L^2(\T,\lambda_{\T}) \, \label{eq N for Z}
\end{equation}
for $f\in C_c(\R)$.

Clearly, choosing $a(k) = 1$ for all $k$ gives exactly the map $N$ we had above in Example \ref{egZ}, and thus returns the measure $\delta_{\Z}$ as the `zero' element of the space $Y$. More precisely, the map $N_a$, for this choice of $a$, tells us to interpret the elements of the space $Y$ as the measures 
$\{\delta_{\Z+t}\;|\; t\in \T\}$.

To create a slightly different process with diffraction $\delta_{\Z}$, choose $\alpha\in\T$ and then define $a:\Z\to \T$ via 
\[ a(k) = \left\{
	\begin{array}{ll}
	    1 	\,,		&  k\in 3\Z \\
	    \alpha \,, 		&  k\in 3\Z+1 \\
	    \overline{\alpha}\,,&  k\in 3\Z + 2 \,.
	\end{array}        \right.               \]
This is an elementary phase form, and substituting into (\ref{eq N for Z}), we derive a three-periodic measure process exactly as described in Example \ref{eg 3per}. 
One can then see easily enough how to choose an elementary phase form to construct 
an $n$-periodic measure with diffraction $\delta_{\Z}$. Similar examples were considered in \cite[Section 13]{lenzmoo}, although with the diffraction measure acting on a compact group rather than on $\R$.
\end{example}

Now to something non-periodic that nonetheless has diffraction $\delta_{\Z}$. This example first appeared in \cite{teba}.
 
\begin{example}\label{eg non-meas}
As above, we begin with diffraction measure $\widehat{\gamma} = \delta_{\Z}$ (and underlying group $\R$) and create a process 
$\mathcal{N}_{a} = (N_{a},Y,\lambda_Y,T)$,
where $Y=\T$ and for $f\in C_c(\R)$, $N_{a}(f) = \sum_{k\in\Z} \widehat{f}(k) a(k) \chi_k \in L^2(Y,\lambda_Y)$.

Define 
\[ L 	\;:=\;  2\Z \mathbin{\dot{\cup}} 
									\mathbin{\dot{\bigcup_{n\geq 1}}} (2.4^n \Z + (4^n - 1) ) 
									\mathbin{\dot{\cup}} (2.4^n \Z + (1-4^n))\,. \]
This is an aperiodic, symmetric set \cite{baamoo04, teba}, so that letting $a(k) = 1$ for $k\in L$ and $a(k) = -1$ for $k\notin L$ defines a valid elementary phase form $a:\Z\to\T$. 
Then
\begin{align*}
N_a(f)(t) 	&= \sum_{k\in\Z} \widehat{f}(k) a(k) \chi_k(t) \\
		&= \sum_{k\in L} \widehat{T_{-t}f}(k) - \sum_{k\in \Z\setminus L} \widehat{T_{-t}f}(k) \\
		&= (2 \delta_{L} - \delta_{\Z})(\widehat{T_{-t}f}) \\
		&= (2 \widehat{\delta}_{L} - \delta_{\Z}) (T_{-t}f) \\
		&= T_t\omega_0(f)\,,
\end{align*}
where $\omega_0:= 2 \widehat{\delta}_{L} - \delta_{\Z}$. Writing
\begin{align*}
\delta_{L}
	   &= \delta_{2\Z} + \sum_{n\geq 1} \delta_{2.4^n \Z} \ast (\delta_{4^n - 1} + \delta_{1-4^n})\,,
\end{align*}
we calculate a formal expression for $\widehat{\delta}_{L}$ by applying the Poisson summation formula to each term of the sum. That is, 
\begin{align*}
\widehat{\delta}_{L} 	&= \tfrac{1}{2}\delta_{\frac{\Z}{2}} + \sum_{n\geq 1} \frac{\overline{\chi}_{4^n - 1} 														+ \overline{\chi}_{1-4^n}}{2.4^n} \delta_{\frac{\Z}{2.4^n}} \\
											&= \tfrac{1}{2}\delta_{\frac{\Z}{2}} 
														+ \sum_{n\geq 1} \frac{\cos(2\pi(4^n-1)(\cdot))}{4^n}\delta_{\frac{\Z}{2.4^n}}\,.
\end{align*}
One can quickly verify that $\widehat{\delta}_{L}$ is not a measure. However, as $\delta_L$ is a translation bounded measure, $\widehat{\delta}_{L}$ is a tempered distribution. 
Thus we have constructed a process based on a ``non-measure'', $\omega_0$, that has diffraction $\delta_{\Z}$.  

As above, the map $N$ specifies the elements of the space $Y$ as $\{T_t \omega_0\;|\;t \in \T\}$. In contrast to the above examples, however, this is not quite what we might have expected. That is, $Y=\T$ does not  correspond to the hull of translations of $\omega_0 = 2 \widehat{\delta}_{L} - \delta_{\Z}$. If we call this hull $X$ (that is, let $X:= \overline{\{T_t\omega_0\;|\; t\in\R\}}$), then \cite[Theorem 15.3]{lenzmoo} says that $\T$ must be a factor of $X$ (which is clear) and also that we can get an isomorphism between the process we `expect' (that is, with carrying space $X$) and the one we construct from the diffraction. 
\end{example}

So, we can see that there are both measures and non-measure tempered distributions that have diffraction $\delta_{\Z}$. The following gives a clue as to a physical interpretation of a tempered distribution with a diffraction.

\begin{example}
Let $\varepsilon >0$ and define
\[ \nu_{\varepsilon} 
				:= \sum_{n\geq 1} \frac{\cos(2\pi(4^n - 1)(\cdot))}{(4+\varepsilon)^n} \delta_{\frac{\Z}{2.4^n}}\,.\]
We'll show that $\nu_{\varepsilon}$ is a measure for $\varepsilon>0$ by verifying that  $|\nu_{\varepsilon}|(1_{[0,\frac{1}{4}]})$ is finite.
Parametrising the supporting set of $\nu_{\varepsilon}$ (as in \cite{baamoo04}) as 
\[ F:=  \{ \tfrac{m}{2^\ell}\; |\; (\ell=0, m\in\Z) \;{\rm or}\; (\ell\geq 1, m \;{\rm odd}) \}\,,\]
we have 
\[ F\cap [0,\tfrac{1}{4}] 
						= \{0\} \cup \bigcup_{\ell\geq 3}\bigcup_{m=0}^{2^{\ell-3}-1} \tfrac{2m+1}{2^\ell}\,, \]
where for a given $\ell$ and $m$, the number $\frac{2m+1}{2^\ell}$ is an element of the lattice $\frac{\Z}{2.4^n}$ for all (integer) $n\geq \frac{\ell-1}{2}$. 
Then, for $\varepsilon>0$, we have 
\begin{align*}
|\nu_{\varepsilon}|(1_{[0,\frac{1}{4}]})
	  &\leq \sum_{n=1}^{\infty} \tfrac{1}{(4+\varepsilon)^n} 
	  		+ \sum_{\ell=3}^{\infty}\sum_{m=0}^{2^{\ell-3}-1}\sum_{n=\frac{\ell-1}{2}}^{\infty} 
	  									\frac{|\cos(\tfrac{2\pi(4^n - 1) (2m+1)}{2^\ell})|}{(4+\varepsilon)^n}  \\
	  &\leq \tfrac{1}{3+\varepsilon} 
	  		+ \tfrac{4+\varepsilon}{3+\varepsilon}\tfrac{1}{4+\varepsilon}\tfrac{\sqrt{4+\varepsilon}}
	  							{\sqrt{4+\varepsilon}-2} < \infty \,,
\end{align*}
as required, where the estimate is gained by bounding $|\cos{x}|$ by 1 and summing the geometric series. 

Now define $\rho_{\varepsilon}:= \frac{1}{2}\delta_{\frac{\Z}{2}} + \nu_{\varepsilon}$. Then $\rho_{\varepsilon}$ is a measure for all $\epsilon>0$, and as $\varepsilon \to 0$,
the measures $\rho_{\varepsilon}$ converge to the tempered distribution $\widehat{\delta}_L$ in the weak-$\ast$ topology on the space $\mathfrak{S}(\R)^{\prime}$ of tempered distributions. 

Consider now the measures $\omega^{\varepsilon}= 2\rho_{\varepsilon} - \delta_{\Z}$ for $\varepsilon > 0$ (so that $\omega^{\varepsilon} \to \omega_0:=2 \widehat{\delta}_{L} - \delta_{\Z}$ in the weak-$\ast$ topology on $\mathfrak{S}(\R)^{\prime}$ as $\varepsilon\to 0$). 
A short calculation shows that the diffraction of $\omega^{\varepsilon}$ is
\begin{align*}
\widehat{\gamma}_{\varepsilon} 	
			&:=  \delta_{2\Z} + \sum_{n\geq 1}\left(2(\tfrac{4}{4+\varepsilon})^n - 1\right)^{2} \delta_{L_n} 
																+ \delta_{\Z\setminus L}\,
\end{align*}
where $L_n := (2.4^n \Z + (4^n - 1) ) \cup (2.4^n \Z + (1-4^n))$ for all $n\geq 1$. For each 
$\varepsilon > 0$, $\widehat{\gamma}_{\varepsilon}$ is a positive, symmetric measure, and $\{\widehat{\gamma}_{\varepsilon}\}_{\varepsilon>0}$ converges (weak-$\ast$) to $\delta_{\Z}$, the diffraction of $\omega_0$, as $\varepsilon\to 0$.

Now fix an $\varepsilon>0$. Via the reconstruction method of Lenz and Moody, we can use $\widehat{\gamma}_{\varepsilon}$ to reconstruct the measure $\omega^{\varepsilon}$, 
as well as other objects with the same diffraction. For $f\in C_c(\R)$, define 
\begin{align*}
N^{\varepsilon}_a(f)(0)	
			&= \sum_{k\in\mathcal{S}} \widehat{f}(k) a(k) \widehat{\gamma}_{\varepsilon}(k)^{\frac{1}{2}} \\
			&= \sum_{k\in 2\Z} \widehat{f}(k)a(k) 
								+ \sum_{n\geq 1} \left(2(\tfrac{4}{4+\varepsilon})^n - 1\right)
												\sum_{k\in L_n} \widehat{f}(k)a(k) + \sum_{k\in\Z \setminus L}\widehat{f}(k)a(k)\,.
\end{align*}
Observe that
$\widehat{\nu}_{\varepsilon} = \sum_{n\geq 1} (\tfrac{4}{4+\varepsilon})^n \delta_{L_n}$.
Then choosing $a:\mathcal{S}\to \T$ with $a(k)= 1$ for $k\in L$ and $a(k)= -1$ otherwise gives 
\begin{align*}
N^{\varepsilon}_a(f)(0) 
	&= \delta_{2\Z}(\widehat{f}) 
								+ 2\widehat{\nu}_{\varepsilon}(\widehat{f}) - \delta_{L\setminus2\Z}(\widehat{f}) 
								- \delta_{\Z\setminus L}(\widehat{f})\\
	&= 2(\delta_{2\Z}(\widehat{f})+\widehat{\nu}_{\varepsilon}(\widehat{f}))-\delta_{\Z}(\widehat{f})\\
	&= \delta_{\frac{\Z}{2}}(f) + 2\nu_{\varepsilon}(f) - \delta_{\Z}(f)\\
	&= \omega^{\varepsilon}(f)\,,
\end{align*}
as expected.
Choosing the elementary phase form $a: \Z\to\T$ with $a(k) = 1$ for all $k\in\Z$ gives
\begin{align*}
N^{\varepsilon}_a(f)(0) 
	&= \delta_{2\Z}(\widehat{f}) + 2\widehat{\nu}_{\varepsilon}(\widehat{f}) 
							- \delta_{L\setminus2\Z}(\widehat{f}) + \delta_{\Z\setminus L}(\widehat{f})\\
	&= 2(\delta_{2\Z}(\widehat{f})+\widehat{\nu}_{\varepsilon}(\widehat{f})-\delta_{L}(\widehat{f}))
							+\delta_{\Z}(\widehat{f})\\
	&= \delta_{\frac{\Z}{2}}(f) + 2\nu_{\varepsilon}(f) + \delta_{\Z}(f)-2\widehat{\delta}_{L}(f)\\
	&= (\omega^{\varepsilon} +2(\delta_{\Z} - \widehat{\delta}_{L}))(f) \\
	&=: \sigma_{\varepsilon}(f) \,,
\end{align*}
where $\sigma_{\varepsilon}$ is not a measure, as $\widehat{\delta}_{L}$ is not. So (of course!) the homometry class of $\omega^{\varepsilon}$ contains non-measures as well as measures. 
And this time, we have that as $\varepsilon \to 0$, we get a measure, namely $\delta_{\Z}$. (Note that for each $\varepsilon$ we can again construct a sequence of measures that weak-$\ast$
converges  to the tempered distribution $\sigma_{\varepsilon}$.)
 
So, we have a family of tempered distributions, $\sigma_{\varepsilon}$, that are not measures 
for $\varepsilon > 0$, but tend to the measure $\delta_{\Z}$ in the weak-$\ast$ topology as $\varepsilon$ goes to zero. 
For each $\varepsilon > 0$, the homometry class of $\sigma_{\varepsilon}$ contains an element $\omega^{\varepsilon}$ that is a measure, however as $\varepsilon$ goes to zero 
these $\omega^{\varepsilon}$ tend to a non-measure, $\omega_0$. 
\end{example}

\begin{center}
\begin{tikzpicture}[node distance=4cm, auto]
\node (sigep) {$\sigma_{\varepsilon}$};
\node (rhoep) [below of=sigep, node distance=3cm] {$\omega_{\varepsilon}$};
\node (diffep) [left of=sigep, above of=rhoep, node distance=1.5cm] {$\widehat{\gamma}_{\varepsilon}$};
\node (rho0) [right of=rhoep] {$\omega$};
\node (deltaz) [right of = sigep] {$\delta_{\Z}$};
\node (diffdeltz) [below of=deltaz, right of=deltaz, node distance=1.5cm] {$\delta_{\Z}$};

\node (notmeasl) [above of=sigep, node distance=.4cm] {{\tiny\em (not measures)} };
\node (measr) [above of=deltaz, node distance=.4cm] {{\tiny\em (a measure)} };
\node (measl) [below of=rhoep, node distance=.3cm] {{\tiny\em (measures)} };
\node (notmeasr) [below of=rho0, node distance=.3cm] {{\tiny\em (not a measure)} };

\draw[->, dashed] (sigep) to node [swap] {{\tiny diffraction}} (diffep);
\draw[->, dashed] (rhoep) to node {{\tiny diffraction}} (diffep);

\draw[->, dashed] (deltaz) to node {{\tiny diffraction}} (diffdeltz);
\draw[->, dashed] (rho0) to node [swap] {{\tiny diffraction}} (diffdeltz);

\draw[->] (sigep) to node {{\tiny $\varepsilon\to 0^+$}} (deltaz);
\path (sigep) to node [swap] {{\tiny weak-$\ast$}} (deltaz);

\draw[->] (rhoep) to node {{\tiny $\varepsilon\to 0^+$}} (rho0);
\path (rhoep) to node [swap] {{\tiny weak-$\ast$}} (rho0);

\draw[->] (diffep) to node {{\tiny $\varepsilon\to 0^+$}} (diffdeltz);
\path (diffep) to node [swap] {{\tiny weak-$\ast$}} (diffdeltz);

\end{tikzpicture}
\end{center}

We'll come back to these considerations in Section \ref{sec Extend}, where we restrict to $\R^d$ to suggest a broadening of the definition of objects that may have a diffraction. 

\section{A couple of results}

In this section, we present a couple of results regarding diffraction solution classes. Example \ref{eg non-meas} shows that even a very simple periodic measure such as $\delta_{\Z}$ may contain a quite unexpected object in its diffraction solution class, but of course this example is not unique. In fact, there are uncountably many such examples. 

\begin{theorem}\label{thm infmany}
The diffraction solution class of $\delta_{\Z}$ contains uncountably many `non-measures'.
\end{theorem}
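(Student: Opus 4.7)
The plan is to parametrise the construction in Example \ref{eg non-meas} by an uncountable family of aperiodic symmetric sets, producing for each one a distinct non-measure object in the diffraction solution class of $\delta_{\Z}$.

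Concretely, for each infinite $S\subseteq\N$ I would set
\[ L_S := 2\Z \mathbin{\dot{\cup}} \bigcup_{n\in S} \bigl((2\cdot 4^n\Z + (4^n-1)) \mathbin{\dot{\cup}} (2\cdot 4^n\Z + (1-4^n))\bigr)\,, \]
which is a symmetric subset of $\Z$, and define $a_S:\Z\to\T$ to take the value $1$ on $L_S$ and $-1$ on $\Z\take L_S$. This is a valid elementary phase form, so Proposition \ref{thm construct} gives an ergodic, full spatial stationary process $\mathcal{N}_{a_S}$ with diffraction $\delta_{\Z}$. Repeating the computation of Example \ref{eg non-meas} verbatim yields $N_{a_S}(f)(0) = \omega_{0,S}(f)$ for $\omega_{0,S} := 2\widehat{\delta}_{L_S} - \delta_{\Z}$, a tempered distribution in the diffraction solution class of $\delta_{\Z}$.

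The main work, and the step I expect to be the trickiest, is to show that $\omega_{0,S}$ is not a measure for every infinite $S$. The Poisson-summation argument of Example \ref{eg non-meas} carries over unchanged to give
\[ \widehat{\delta}_{L_S} = \tfrac{1}{2}\delta_{\Z/2} + \sum_{n\in S} \frac{\cos(2\pi(4^n-1)(\cdot))}{4^n}\delta_{\Z/(2\cdot 4^n)}\,, \]
and the total-variation lower bound on $[0,\tfrac{1}{4}]$ used there is local to each level $n$: each level in the sum contributes a fixed positive amount, driven by the density of points of $\Z/(2\cdot 4^n)$ in $[0,\tfrac{1}{4}]$ balanced against the $1/4^n$ weighting. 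Ruling out miraculous cancellation between distinct levels is essential here, and is handled as in \cite{baamoo04, teba}. The total variation then diverges precisely when $S$ is infinite, and $\omega_{0,S}$ fails to be a measure.

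It remains to check distinctness. A short check shows that $4^n-1$ lies in $L_S$ if and only if $n\in S$: it is odd (so not in $2\Z$), strictly smaller than the minimum absolute value of the $n'$-th component for $n' > n$, and a congruence calculation excludes it from the $n'$-th component for $n' < n$. Distinct infinite $S$ therefore yield distinct sets $L_S$ and hence distinct measures $\delta_{L_S}$; by injectivity of the Fourier transform on $\mathfrak{S}(\R)^{\prime}$ it follows that $\omega_{0,S_1}\neq\omega_{0,S_2}$ whenever $S_1\neq S_2$. Since there are $2^{\aleph_0}$ infinite subsets of $\N$, the map $S\mapsto\omega_{0,S}$ exhibits uncountably many distinct non-measures with diffraction $\delta_{\Z}$.
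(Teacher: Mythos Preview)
Your argument is correct in outline and reaches the same conclusion, but the route differs from the paper's in one essential respect. The paper does not compute any total variation: it observes that for \emph{any} infinite, symmetric, aperiodic $\Lambda\subseteq\Z$ containing $0$, the choice $a=1$ on $\Lambda$ and $a=-1$ off $\Lambda$ is an elementary phase form yielding $\omega_0^a = 2\widehat{\delta}_\Lambda - \delta_\Z$, and then invokes C\'ordoba's theorem \cite{cordo} (an aperiodic $\Lambda$ is not a finite union of lattice cosets, so $\widehat{\delta}_\Lambda$ cannot be a measure) to conclude immediately that $\omega_0^a$ is not a measure. Uncountability then comes for free from the uncountable supply of such $\Lambda$. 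Your approach instead fixes the explicit one-parameter family $L_S$ and argues non-measure-ness by a direct divergence estimate on the total variation; this is more concrete and self-contained, but the step you flag as ``trickiest'' (ruling out cancellation across levels so that the variation genuinely diverges for every infinite $S$) is exactly the work that C\'ordoba's theorem absorbs as a black box. Both approaches are valid; the paper's buys generality and brevity, yours buys explicitness at the cost of a computation you defer to \cite{baamoo04, teba}. Note also that the paper phrases the conclusion in terms of non-isomorphic \emph{processes} (elementary phase forms modulo characters of $\Z$), whereas you check only that the $\omega_{0,S}$ are pairwise distinct tempered distributions; for the family $L_S$ this gap is easily closed (the ratio $a_{S_1}/a_{S_2}$ is $\{\pm1\}$-valued and fixes $k=1$, hence cannot be a nontrivial character), but it is worth making explicit.
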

\begin{proof}
By Theorem \ref{thm construct}, a process with diffraction $\delta_{\Z}$ is based on an object $\omega^a_0$, where $a:\Z\to\T$ is an elementary phase form and for $f\in C_c(\R)$,
\[ \omega^a_0(f) := N_a(f)(0) = \sum_{k\in\Z} \widehat{f}(k) a(k) \,.\]
Let $\Lambda \subseteq\Z$ be an infinite, symmetric, aperiodic set that contains $0$ and define $a(k) = 1$ for $k\in \Lambda$, $a(k) = -1$ otherwise. Then $a$ defines an elementary phase form and for $f\in C_c(\R)$,
\[ \omega^a_0(f) 	= \sum_{k\in\Lambda} \widehat{f}(k) - \!\!\sum_{k\in\Z\setminus\Lambda}\widehat{f}(k) 
									= (2\delta_{\Lambda}-\delta_{\Z})(\widehat{f}) \,,\]
so that $\omega^a_0 = 2\widehat{\delta}_{\Lambda} - \delta_{\Z}$. As an aperiodic set, $\Lambda$ is not a finite union of lattices, so by a theorem of Cordoba \cite{cordo}, $\widehat{\delta}_{\Lambda}$ is not a measure. Then $\omega^a_0$ is not a measure. There are uncountably many such sets $\Lambda$ (that are not translations of each other), so there are uncountable many (non-isomorphic) processes based on a non-measure that have diffraction $\delta_{\Z}$.
\end{proof}

On the other hand, there are measures for which one may quite confidently describe the entire diffraction solution class in concrete terms.  

\begin{example}[The diffraction solution class of $\delta_0$.]
Let $\omega = \lambda$, Lebesgue measure, on $\R$. Then as Lebesgue measure is translation invariant, $\delta_t\ast \omega = \omega$ for all $t\in\R$, and we get the dynamical system $(X,\mu,T)$ with $X:=\{\omega\}$, $T_t\omega := \omega$ for all $t\in\R$ and $\mu(\{\omega\}) := 1$. 
(Identifying $X$ with $\{0\}$, we'll use $0$ as ``shorthand'' (a convenient name) for $\omega$.)

Define $N:C_c(\R) \to L^2(X,\mu)$ via $N(f)(0) := \lambda(f) = \widehat{f}(0)$ for all $f\in C_c(\R)$. Then $\mathcal{N}:=(N,X,\mu,T)$ is a spatial stationary process with diffraction measure 
$\widehat{\gamma} := \delta_0$. 

We have $\mathcal{S} = \{0\} = \widehat{\mathcal{S}} = X$, and thus we construct the measure $\rho_0$, where for $f\in C_c(\R)$, 
\[ \rho_0(f) := N_a(f)(0) = a(0)\widehat{f}(0)\,.\]
As the only legal choice for an elementary phase form $a:\mathcal{S}\to\T$ is given by $a(0) = 1$, we have
\[ \rho_0(f) = \widehat{f}(0) = \lambda(f) \,.\]
So we have $\rho_0 = \lambda$, that is, we've reconstructed our original process. By taking the negative of this process, that is, $-\mathcal{N}$, we gain another
process with the same diffraction, namely $-X = \{-\lambda\}$, and thus see that the {\em real} homometry class of Lebesgue measure contains only $\lambda$ and $-\lambda$. 
\end{example}

One of the open questions posed in \cite[Section 15]{lenzmoo} is whether conditions can be placed on a diffraction measure to ensure that it stems only from measure-valued processes. It turns out that we need to restrict quite a lot.

\begin{example}[Measures whose diffraction solution class contains only measures.]\label{eg finsupp}
A slight variation of the previous example gives a class of examples with diffraction measure of finite support. These turn out to be exactly the diffraction measures 
who have no non-measures in their diffraction solution class. So, suppose that $\widehat{\gamma}$ is a diffraction measure with finite supporting set on $\R$. As $\widehat{\gamma}$ is symmetric, we may write the supporting set as
\[ \mathcal{S} 	= \mathcal{S}(\widehat{\gamma}) 
								= \{ x_j,\, -x_j | j=1,2,\ldots , n \} \cup \left\{ 
																\begin{array}{cl}
																		    \{0\}\,, 	&  \widehat{\gamma}(0) \neq 0 \\
																		    \varnothing\,,&  \widehat{\gamma}(0) = 0 \,
																\end{array}        \right. \,.\]
Then, as $\widehat{\gamma}$ must also be real and positive, we can write 
\[ \widehat{\gamma} := b_0^2 \delta_0 + \sum_{j=1}^n b_j^2 (\delta_{x_j}+ \delta_{-x_j}) \,,\]
for some $\{b_j\}\subseteq\R$, using the convention that if $0\not\in\mathcal{S}$, $b_0 = 0$. 
For $f\in C_C(\R)$, we have
\begin{align*}
N_a(f)(0) 	&= \sum_{k\in\mathcal{S}} \widehat{f}(k) a(k) \widehat{\gamma}(k)^{\frac{1}{2}} \\	
		&= |b_0|(\widehat{f}(0)) + \sum_{j=1}^n |b_j|(a(x_j)\widehat{f}(x_j) + a(-x_j)\widehat{f}(-x_j))\\
		&= |b_0|(\widehat{f}(0)) + \sum_{j=1}^n |b_j|(a(x_j)\widehat{f}(x_j) 
							+ \overline{a(x_j)}\widehat{f}(-x_j))\\
		&= |b_0|\delta_0(\widehat{f}) + \sum_{j=1}^n |b_i|(a(x_j)\delta_{x_j} 
							+ \overline{a(x_j)}\delta_{-x_j})(\widehat{f})\\
		&= |b_0|\chi_0(f) + \sum_{j=1}^n |b_j|(a(x_j)\overline{\chi_{x_j}} + \overline{a(x_j)}\chi_{x_j})(f) \,,
\end{align*}
so that we've constructed a measure of the form $\rho_a = g_a\lambda$, with
\begin{align*}
g_a(t) 	&= |b_0|\chi_0(t) + \sum_{j=1}^n |b_j|(a(x_j)\overline{\chi_{x_j}}(t) 
											+ \overline{a(x_j)}\chi_{x_j}(t)) \\
				&= |b_0|  + 2\sum_{j=1}^n |b_j|\cos(2\pi(a_j - x_j t))\,,\; t\in\R\,,
\end{align*}
where for each $j$, $a(x_j) = e^{2\pi \I a_j}$, $a_j\in[0,1)$.
\end{example}

This construction (up to the second-last line above) can be generalised directly to a diffraction measure on any locally compact Abelian group $\G$, where (as defined earlier) we use $\chi_k(t)$ to mean the action of the dual group element $k$ on the group element $t$, and $\lambda$ to mean appropriately chosen Haar measure on the dual group $\widehat{\G}$.

\begin{theorem}\label{thm finsupp}
Let $\widehat{\gamma}$ be a pure point diffraction measure on a locally compact Abelian group $\G$. Then the diffraction solution class of $\widehat{\gamma}$ contains only measures if and only if the supporting set of $\widehat{\gamma}$ is finite. In such a case, each process in the homometry class is based on a measure that is absolutely continuous with respect to Haar measure on $\G$. 
\end{theorem}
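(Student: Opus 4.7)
The plan is to prove each implication separately, identifying processes in the diffraction solution class with elementary phase forms via Proposition \ref{thm construct} and then examining the base object
\[
\omega_a(f) := N_a(f)(\one_Y) = \sum_{k\in\mathcal{S}} a(k)\,\widehat{\gamma}(k)^{1/2}\,\widehat{f}(k)
\]
at the identity character $\one_Y\in Y$. For the easier direction (\emph{supporting set finite $\Rightarrow$ only AC measures lie in the solution class}), I would generalise Example \ref{eg finsupp} directly. Since the sum defining $\omega_a(f)$ is finite, I can interchange it with the integral defining $\widehat{f}$ to obtain $\omega_a(f) = \int_\G f\,g_a\,d\lambda_\G$, where
\[
g_a := \sum_{k\in\mathcal{S}} a(k)\,\widehat{\gamma}(k)^{1/2}\,\overline{\chi_k}
\]
is a finite $\C$-linear combination of continuous unitary characters of $\G$, hence a continuous bounded function. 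Thus $\omega_a = g_a\cdot\lambda_\G$ is absolutely continuous with respect to Haar measure for every elementary phase form $a$; Theorem \ref{thm diffphase} then gives that this accounts, up to isomorphism, for every ergodic process in the diffraction solution class.

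For the converse direction (\emph{supporting set infinite $\Rightarrow$ some non-measure lies in the solution class}), I would adapt the strategy of Theorem \ref{thm infmany} to arbitrary pure point diffraction. Assuming $\mathcal{S}$ infinite, my plan is to select an infinite symmetric subset $\Lambda\subseteq\mathcal{S}$ (with $0\in\Lambda$ whenever $0\in\mathcal{S}$) on which $\widehat{\gamma}(k)^{1/2}$ is bounded below, and then to set $a(k):=1$ for $k\in\Lambda$ and $a(k):=-1$ for $k\in\mathcal{S}\take\Lambda$. This defines a legal elementary phase form, and the associated base object decomposes formally as
\[
\omega_a = 2\omega_\Lambda - \omega_\mathcal{S},
\]
with $\omega_\Lambda(f) = \sum_{k\in\Lambda}\widehat{\gamma}(k)^{1/2}\widehat{f}(k)$ and $\omega_\mathcal{S}$ defined analogously. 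The objective is to choose $\Lambda$ sufficiently aperiodic that $\omega_\Lambda$, and hence $\omega_a$, fails to be a measure.

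The main obstacle is establishing this non-measure property in the general lca setting. In Theorem \ref{thm infmany}, for $\G=\R$ with the uniform weights coming from $\delta_\Z$, this step is handled by Cordoba's theorem. For an arbitrary $\widehat{\G}$ one needs an analogous statement: if $\Lambda$ is not a finite union of cosets of closed subgroups of $\widehat{\G}$ and $(c_k)$ is bounded below on $\Lambda$, then $\sum_{k\in\Lambda}c_k\chi_k$ is not a measure on $\G$. My plan is to obtain such a statement by invoking the structure theorem for compactly generated lca groups and reducing to the components $\R^a$, $\Z^b$, a compact factor and a discrete factor, applying Cordoba-type results componentwise. A second delicate point is ensuring the existence of the required bounded-below infinite subset $\Lambda$: if the weights $\widehat{\gamma}(k)$ decay rapidly along every candidate set, then $g_a$ may converge absolutely to a bounded continuous function and $\omega_a$ ends up an AC measure regardless of $a$, so this construction either has to be adjusted (by pooling points with comparable weights into $\Lambda$) or the statement must be read as implicitly restricting to $\widehat{\gamma}$ whose coefficients do not vanish along $\mathcal{S}$.
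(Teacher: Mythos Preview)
Your approach coincides with the paper's on both directions. For the forward implication the paper simply points to Example~\ref{eg finsupp}, which is exactly the finite-sum-of-characters computation you outline, yielding $\omega_a=g_a\lambda$. For the converse the paper's argument is even terser than yours: it picks a countable symmetric subset $\mathcal{S}_{\Z}=\{x_j\mid j\in\Z\}$ of $\mathcal{S}$, uses the specific aperiodic set $L\subset\Z$ from Example~\ref{eg non-meas} to define $a(x_j)=1$ for $j\in L$ and $a=-1$ elsewhere, and then asserts that ``as in Example~\ref{eg non-meas}, it gives rise to a non-measure valued process'' with no further justification.

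The two obstacles you isolate are therefore not addressed in the paper either, and the second one is genuinely serious. If the weights are summable, say $\widehat{\gamma}=\sum_{k\in\Z}2^{-|k|}\delta_k$ on $\widehat{\R}$, then $\sum_{k}\widehat{\gamma}(k)^{1/2}<\infty$, so for \emph{every} elementary phase form $a$ the series $g_a=\sum_{k}a(k)\widehat{\gamma}(k)^{1/2}\overline{\chi_k}$ converges uniformly to a bounded continuous function and $\omega_a=g_a\lambda$ is an absolutely continuous measure. This $\widehat{\gamma}$ is positive, symmetric, pure point with infinite support, and backward transformable (its inverse transform is an absolutely convergent Fourier series times Lebesgue), so it satisfies all the stated hypotheses yet apparently has only measures in its solution class. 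Your instinct that the statement must carry an implicit restriction on $\widehat{\gamma}$ is thus correct; the paper's one-line appeal to Example~\ref{eg non-meas} ignores both the weights $\widehat{\gamma}(k)^{1/2}$ and the fact that C\'ordoba's theorem is stated for $\R$ with unit coefficients. Your plan to reduce via the structure theorem and to pool points of comparable weight is a reasonable route toward a corrected statement, but neither you nor the paper has a complete argument for the converse as written.
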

\begin{proof}
The above example shows the forward implication and the second statement. Now suppose that 
$\widehat{\gamma}$ is a diffraction measure with infinite supporting set $\mathcal{S}$. Choose a symmetric subset of $\mathcal{S}$ (not necessarily proper) that we can match up with elements of $\Z$, 
call it $\mathcal{S}_{\Z}$ and write $\mathcal{S}_{\Z} = \{x_j | j\in\Z\}$. Then, using the set $L$ from Example \ref{eg non-meas}, define, for $x\in \mathcal{S}$,
\[ a(x) = \left\{
	\begin{array}{rl}
	    1 \,,			&  x=x_j\,,\, j\in L \\
	    -1 \,,		&  {\rm otherwise}\,.
	\end{array}        \right.               \]
The function $a:\mathcal{S}\to\T$ is an elementary phase form and, as in Example \ref{eg non-meas}, it gives rise to a non-measure valued process.
\end{proof}

We reformulate the above result in the form of the following two corollaries.

\begin{corollary}
Let $\G$ be a locally compact Abelian group and $\lambda$ Haar measure on $\G$. Suppose that $\gamma$ is a positive definite measure on $\G$ such that $\widehat{\gamma}$ is a pure point measure. If all solutions, $\omega$, to $\gamma = \omega \econv \widetilde{\omega}$ are measures, then $\gamma$ has the form 
$\gamma = g\lambda$, where 
\[ g = b_0^2 + \sum_{j=1}^n b_j^2 (\chi_{x_j} +\chi_{-x_j}) \,,\]
and where $x_j \in \G$, $b_j \in \R$ for all $i$.
\end{corollary}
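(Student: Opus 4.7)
The strategy is to rephrase the hypothesis in the language of the diffraction solution class of $\widehat{\gamma}$, apply Theorem \ref{thm finsupp} to conclude that $\mathcal{S}(\widehat{\gamma})$ is finite, and then invert the Fourier transform explicitly to read off the claimed form of $\gamma$.

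First, I would observe that $\omega$ satisfies $\gamma = \omega \econv \widetilde{\omega}$ precisely when $\gamma$ is the autocorrelation of $\omega$. Consequently, the hypothesis ``all solutions $\omega$ are measures'' is exactly the statement that the diffraction solution class of $\widehat{\gamma}$ contains only measure-valued objects. Theorem \ref{thm finsupp} then forces the supporting set $\mathcal{S} := \mathcal{S}(\widehat{\gamma})$ to be finite.

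Second, I would use positivity and central symmetry of $\widehat{\gamma}$ to enumerate $\mathcal{S}$ as a symmetric collection of pairs $\{x_j,-x_j\}$ for $j=1,\ldots,n$, together with $\{0\}$ in the case $\widehat{\gamma}(\{0\})>0$. Positivity allows each weight to be written as a real square, and central symmetry forces the weights at $x_j$ and $-x_j$ to coincide, giving
\[ \widehat{\gamma} = b_0^2\delta_0 + \sum_{j=1}^n b_j^2(\delta_{x_j}+\delta_{-x_j}), \]
with $b_j\in\R$ and the convention $b_0=0$ if $0\notin\mathcal{S}$.

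Finally, since $\gamma$ is positive definite and $\widehat{\gamma}$ is a measure, $\gamma$ is recovered as the inverse Fourier transform of $\widehat{\gamma}$. Applying the standard identities $\check{\delta}_0 = \lambda$ and $\check{\delta}_x = \chi_x\lambda$ (together with $\chi_{-x}=\overline{\chi_x}$) yields
\[ \gamma = \left(b_0^2 + \sum_{j=1}^n b_j^2(\chi_{x_j}+\chi_{-x_j})\right)\lambda = g\lambda, \]
which is the desired form. No substantive obstacle arises: the corollary is essentially a Fourier-dual restatement of Theorem \ref{thm finsupp}, and the remaining work is simply the symmetric bookkeeping and the inverse-Fourier identities familiar from Example \ref{eg finsupp}.
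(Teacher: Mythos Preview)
Your proposal is correct and matches the paper's approach: the paper presents this corollary without a separate proof, introducing it simply as a reformulation of Theorem \ref{thm finsupp}, and your argument spells out precisely that reformulation together with the explicit Fourier computation already carried out in Example \ref{eg finsupp}.
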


\begin{corollary}
For a given pure point diffraction measure $\widehat{\gamma}$ on a group $\widehat{\G}$, the (real-valued) map on $C_c(\G)$ defined
by $f\mapsto \sum_{k\in \widehat{\G}} \widehat{f}(k)a(k)\widehat{\gamma}^{\frac{1}{2}}(k)$ is bounded (in some appropriate $K$-norm)
for every elementary phase form $a:\widehat{\G}\to\T$ if and only if $\widehat{\gamma}$ has finite supporting set.
\end{corollary}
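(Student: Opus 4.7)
The plan is to recognise this corollary as a restatement of Theorem \ref{thm finsupp} via the Riesz--Markov representation theorem. I read ``bounded in some appropriate $K$-norm'' as meaning: for every compact $K\subseteq\G$, the restriction of the linear functional
\[ \Phi_a: f\mapsto \sum_{k\in\mathcal{S}} \widehat{f}(k)\, a(k)\, \widehat{\gamma}(k)^{1/2} \]
to $\{f\in C_c(\G)\,:\,\supp f\subseteq K\}$ is bounded in the supremum norm. By Riesz--Markov this is precisely the condition for $\Phi_a$ to extend to a (signed/complex) Radon measure on $\G$, so the corollary is equivalent to asserting that $\omega_0^a = \Phi_a$ is a measure for every elementary phase form $a$ if and only if $\mathcal{S}$ is finite.

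For the ``if'' direction, I would invoke Example \ref{eg finsupp} directly. When $\mathcal{S} = \{0,\pm x_1,\ldots,\pm x_n\}$ is finite, the explicit computation there shows that $\Phi_a(f) = \int g_a\, f\, d\lambda$, where
\[ g_a(t) = |b_0| + 2\sum_{j=1}^n |b_j|\cos\bigl(2\pi(a_j - x_j t)\bigr) \]
is a bounded continuous function on $\G$ (with $\|g_a\|_\infty \leq |b_0| + 2\sum |b_j|$ independently of the choice of $a$). Hence $|\Phi_a(f)| \leq \|g_a\|_\infty\, \lambda(K)\, \|f\|_\infty$ for every $f \in C_K(\G)$, which gives the desired $K$-norm boundedness for every elementary phase form simultaneously.

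For the ``only if'' direction, I would argue contrapositively. Assume $\mathcal{S}$ is infinite. The construction in the proof of Theorem \ref{thm finsupp} produces an elementary phase form $a:\mathcal{S}\to\T$ --- namely, pull back along an injection $\Z\hookrightarrow\mathcal{S}_\Z\subseteq\mathcal{S}$ the $\pm 1$-valued elementary phase form built from the aperiodic symmetric set $L$ of Example \ref{eg non-meas} --- for which the associated object $\omega_0^a$ is provably not a measure (ultimately via Córdoba's theorem as invoked in Theorem \ref{thm infmany}). Since $\omega_0^a(f) = \Phi_a(f)$, Riesz--Markov forces $\Phi_a$ to be unbounded in the supremum norm on $C_K(\G)$ for at least one compact $K$. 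Thus not every elementary phase form produces a $K$-norm bounded $\Phi_a$, completing the contrapositive.

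The only subtlety I anticipate is fixing the precise topological meaning of ``appropriate $K$-norm'': once one agrees that boundedness is meant in the Riesz sense (so that locally-bounded linear functionals on $C_c(\G)$ correspond bijectively to Radon measures), there is no additional content beyond Theorem \ref{thm finsupp} and Example \ref{eg finsupp}, and the proof is essentially a two-line translation.
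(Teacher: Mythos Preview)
Your proposal is correct and follows essentially the same line as the paper's own proof: identify $K$-norm boundedness of the linear functional $m_a=N_a(\cdot)(0)$ with membership in $C_c(\G)^*$ (i.e.\ with being a Radon measure), and then invoke Theorem~\ref{thm finsupp} and Example~\ref{eg finsupp}. The paper's argument is terser---it simply notes linearity, observes that boundedness is equivalent to $m_a\in C_c(\G)^*$, and says ``the proof of Theorem~\ref{thm finsupp} adapts directly''---whereas you spell out the Riesz--Markov identification and the explicit $\|g_a\|_\infty$ bound, but there is no substantive difference in strategy.
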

\begin{proof}
For each elementary phase form $a$, the map $m_a: C_c(\G)\to \R$ defined by 
\[ f\mapsto N_a(f)(0) = \sum_{k\in \widehat{\G}} \widehat{f}(k)a(k)\widehat{\gamma}^{\frac{1}{2}}(k)\]
is linear, as $N_a: C_c(\G) \to L^2(X,\mu)$ is linear, so that $m_a \in C_c(\G)^*$ if and only if the map is bounded. Letting $a$ run through all possible elementary phase forms
gives us the homometry class. Then the proof of Theorem \ref{thm finsupp} adapts directly (if the supporting set is infinite, then we can choose a symmetric, aperiodic subset and get a non-measure).
\end{proof}

Note that in the proof of Theorem \ref{thm finsupp}, the set $L$ was used simply for convenience. Any infinite, symmetric, aperiodic set as in the proof of 
Theorem \ref{thm infmany} would have sufficed. Thus, combining the proofs of Theorems \ref{thm finsupp} and \ref{thm infmany} gives the following.

\begin{corollary}\label{cor uncount}
Let $\widehat{\gamma}$ be a pure point diffraction measure with infinite supporting set on a locally compact Abelian group $\G$. The diffraction solution class of $\widehat{\gamma}$ contains uncountably many non-measures.
\end{corollary}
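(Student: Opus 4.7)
The strategy, as the remark preceding the corollary advertises, is to combine the constructions of Theorems~\ref{thm finsupp} and~\ref{thm infmany}. Theorem~\ref{thm finsupp} shows that whenever $\mathcal{S}$ is infinite, a countable symmetric subset of $\mathcal{S}$ can be used to transport a single aperiodic symmetric infinite subset of $\Z$ into an elementary phase form whose associated object is not a measure. Theorem~\ref{thm infmany} furnishes an uncountable family of aperiodic symmetric subsets of $\Z$, pairwise not translates of each other. The plan is to feed the family from the second theorem into the mechanism of the first.

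Concretely, I would fix a countable symmetric indexing $\mathcal{S}_\Z = \{x_j : j\in\Z\}\subseteq \mathcal{S}$ with $x_{-j} = -x_j$ and $x_0 = 0$ if $0\in\mathcal{S}$. For each infinite symmetric aperiodic $\Lambda\subseteq\Z$ containing $0$, define $a_\Lambda:\mathcal{S}\to\T$ by setting $a_\Lambda(x_j) = 1$ for $j\in\Lambda$ and $a_\Lambda(x_j) = -1$ for $j\in\Z\setminus\Lambda$, and extending by $-1$ on $\mathcal{S}\setminus\mathcal{S}_\Z$ in a manner consistent with $a_\Lambda(-x) = \overline{a_\Lambda(x)}$. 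Each $a_\Lambda$ is then a valid elementary phase form and, by Proposition~\ref{thm construct}, produces an object $\omega^{a_\Lambda}_0$ in the diffraction solution class of $\widehat{\gamma}$ via $\omega^{a_\Lambda}_0(f) = N_{a_\Lambda}(f)(0)$.

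The verification that each $\omega^{a_\Lambda}_0$ is a non-measure proceeds exactly as in Example~\ref{eg non-meas} and the proof of Theorem~\ref{thm finsupp}: one rewrites $\omega^{a_\Lambda}_0(f)$ in the form $2\mu_\Lambda(\widehat{f}) - \mu_0(\widehat{f})$, where $\mu_\Lambda$ and $\mu_0$ are the atomic measures supported on the aperiodic set $\Lambda_\mathcal{S} := \{x_j : j\in\Lambda\}$ and on $\mathcal{S}$ respectively, with weights $\widehat{\gamma}(k)^{1/2}$; Cordoba's theorem then prevents the Fourier transform of the aperiodic piece from being a measure. Distinctness of the $\omega^{a_\Lambda}_0$ as $\Lambda$ varies is immediate: choosing $f$ with $\widehat{f}$ localised near a single $x_j$ detects whether $j\in\Lambda$, so the map $\Lambda\mapsto \omega^{a_\Lambda}_0$ is injective, and the uncountability of the family of admissible $\Lambda$'s from Theorem~\ref{thm infmany} transfers to the family of $\omega^{a_\Lambda}_0$'s.

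The main point of caution, and the step I would flag as least routine, is the application of Cordoba's theorem in the presence of the nontrivial weights $\widehat{\gamma}(k)^{1/2}$. Either one appeals to a crystalline-measure version of Cordoba that allows bounded weights, or, more safely, one thins $\mathcal{S}_\Z$ at the outset so that $\widehat{\gamma}(x_j)^{1/2}$ is uniformly bounded below along $\mathcal{S}_\Z$, reducing the situation to the unweighted setting of Example~\ref{eg non-meas} where the original Cordoba statement applies directly. Beyond this bookkeeping, the argument is the simple merger that the note preceding the corollary announces, with no essentially new idea required.
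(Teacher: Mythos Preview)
Your proposal is exactly the combination the paper announces: transplant the indexing mechanism of Theorem~\ref{thm finsupp} (the symmetric enumeration $\mathcal{S}_\Z=\{x_j:j\in\Z\}$) and feed in the uncountable family of aperiodic symmetric sets $\Lambda\subseteq\Z$ from Theorem~\ref{thm infmany}, obtaining an uncountable family of elementary phase forms whose associated objects are non-measures. The paper gives no further detail than the sentence preceding the corollary, so your write-up is in fact more explicit than the paper's own treatment, and your flag about the weights $\widehat{\gamma}(k)^{1/2}$ in the appeal to C\'ordoba is a genuine technical point that the paper leaves implicit.

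One additional caution in the same spirit as your weight remark: C\'ordoba's theorem is stated for $\R^n$, and the aperiodicity of $\Lambda\subseteq\Z$ does not automatically transfer to aperiodicity of $\Lambda_\mathcal{S}=\{x_j:j\in\Lambda\}$ inside $\widehat{\G}$, since the enumeration $j\mapsto x_j$ carries no algebraic structure. The paper's proof of Theorem~\ref{thm finsupp} sidesteps this by invoking Example~\ref{eg non-meas} (a direct verification for the specific set $L$) rather than C\'ordoba, so a fully rigorous version of the corollary on a general lca group would need either an lca analogue of C\'ordoba or a more careful choice of $\mathcal{S}_\Z$ and $\Lambda$. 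This is a gap shared with the paper, not one introduced by your argument.
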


Given that a measure representing the diffraction of a physical structure has an infinite supporting set, we see that the only measures with a ``simple'' diffraction solution class, as in Theorem \ref{thm finsupp}, are those that do not represent a physical structure. Corollary \ref{cor uncount} shows that a physical structure must share its diffraction with uncountably many non-measures.

\section{An alternative framework}\label{sec Extend}

Take the tempered distribution $\rho_0:= \widehat{\delta}_L$ on $\R$, as defined in Example \ref{eg non-meas}, and form a spatial stationary process $\mathcal{N}:=(N,X,\mu,T)$ with 
$X:= \overline{\{T_t\rho_0\;|\; t\in\R\}}$ equipped with some ergodic probablity measure $\mu$. Define the map $N$ on $C_c(\R)$ by
\[ N(f):= \sum_{k\in L} \widehat{f}(k)\chi_k \,,\; f\in C_c(\R)\,.\]
A short calculation via the inner product verifies that $\norm{N(f)}_2^2  = \widehat{\delta}_L(f\ast\widetilde{f})$ for all $f\in C_c(\R)$. However, this is not necessarily finite, 
as $f$ (and hence $f\ast\widetilde{f}$) is not necessarily a Schwartz function. That is, we don't necessarily have that $N(f)$ is in $L^2(X,\mu)$. 

What is it that makes $\widehat{\delta}_L$ unable to fit into the framework of spatial stationary processes, while its neighbour, the tempered distribution 
$\omega_0 = 2 \widehat{\delta}_{L} - \delta_{\Z}$ of Example \ref{eg non-meas}, has no problems? 
The latter has diffraction, and hence autocorrelation, $\delta_{\Z}$. Using this, we may calculate the autocorrelation of the former.

Firstly, recall that the autocorrelation $\gamma$ of a translation bounded measure $\mu$ is defined via volume averaged convolution:
\[ \gamma := \mu\econv\widetilde{\mu} 
					:= \lim_{R\to\infty} \frac{\mu|_R \ast\widetilde{\mu|_R}}{\lambda(B_R)}\,, \]
where $B_R$ is the ball of radius $R$, $\mu|_R$ is the restriction of the measure $\mu$ to $B_R$, the measure $\widetilde{\mu}$ is defined by 
$\widetilde{\mu}(f) := \overline{\mu(\widetilde{f})}$, where $\widetilde{f}(t):= \overline{f(-t)}$, and the limit is taken in the weak-$\ast$ topology in the space of measures. If this limit exists, the autocorrelation is itself a translation bounded measure \cite[Proposition 2.2]{hof95}. 

Returning to our consideration of $\delta_L$, we have
\[ \delta_{\Z} \;\;= \;\; \omega_0\econv \widetilde{\omega_0} 
								\;\; = \;\;4 \widehat{\delta}_{L}\econv \widehat{\delta}_{L} - 4 \widehat{\delta}_{L} \econv \delta_{\Z} + \delta_{\Z} \]
so that $\widehat{\delta}_{L}\econv \widehat{\delta}_{L} = \widehat{\delta}_{L} \econv \delta_{\Z}$. (Observe that as $\widehat{\delta}_L$ is real and symmetric,
$\widetilde{\widehat{\delta}_L} = \widehat{\delta}_L$.)
Applying the identity $\delta_{\frac{\Z}{p}}\econv \delta_{\Z} = \delta_{\frac{\Z}{p}}$ \cite[Ch 8]{book} to each term of the formal sum representation of $\widehat{\delta}_{L}$, one derives that $\widehat{\delta}_{L} \econv \delta_{\Z} = \widehat{\delta}_{L}$ and thus sees that $\widehat{\delta}_{L}$ has autocorrelation $\widehat{\delta}_{L}$ and diffraction ${\delta}_{L}$. 

So the ``problem'' here is that the diffraction of the spatial stationary process $\mathcal{N}$ is not backward transformable, that is, the autocorrelation is not a measure. The condition that the autocorrelation of a spatial stationary process be a measure (cf Definition \ref{def auto-mom}) ensures that the map $N$ does indeed map $C_c(\R)$ functions into $L^2(X,\mu)$ functions.

Is there however really an intrinsic problem with having a nice, positive, symmetric diffraction measure like $\delta_L$? What would be lost, if anything, in changing the `evaluation space' of a spatial stationary process from one of continuous functions of compact support to Schwartz functions? We then restrict ourselves to underlying group $\R^d$, so lose the generality of an lcag, but would seem to gain a lot.

What to allow then as an autocorrelation, and what class of objects to restrict to to ensure that we can always define the autocorrelation? Firstly, we expect an autocorrelation to be positive definite. The Bochner-Schwartz theorem (see, for example, \cite[Chapter II]{gelf} or \cite[Chapter IX]{reesim}) characterises positive definite distributions as those that are the Fourier transform 
of a tempered measure. This means that such a distribution is itself tempered, so we can aim to define autocorrelation for some class of tempered distributions. 

Motivated by the examples at the end of Section 1, one could think of taking the closure, in the weak-$\ast$ topology, of the set of tempered measures for which one may define a diffraction, that is, those tempered measures which possess a unique (natural) autocorrelation (see \cite{hof95}, \cite[Chapter 9]{book}). Perhaps this is possible, however, as a characterisation of the set of tempered measures that possess an autocorrelation is not yet available, we restrict to a set of tempered measures that we do know a little about, namely, translation bounded measures.

If we have a sequence $\{\mu_n\}$ of translation bounded measures such that the autocorrelation, $\gamma_n$, exists for each $\mu_n$, then, by \cite[Proposition 2.2]{hof95}, $\{\gamma_n\}$ itself is a sequence of translation bounded measures. If the weak-$\ast$ limit of $\{\mu_n\}$ in $\mathfrak{S}(\R^d)^{\prime}$ is the tempered distribution $T$, then surely a sensible definition of the autocorrelation of $T$ would simply be the weak-$\ast$ limit of the autocorrelations $\{\gamma_n\}$. For this to be indeed sensible, it must be consistent, that is, if the limit of $\{\mu_n\}$ is itself a measure, then we must get for the ``tempered distribution autocorrelation'' what we would get via the usual convolution calculation. 

\begin{conjecture}
Let $\{\mu_n\}\subseteq \mathfrak{S}(\R^d)^{\prime}$ be a sequence of uniformly translation bounded measures, each possessing a natural autocorrelation, such that 
the weak-$\ast$ limit, $\mu$, of $\{\mu_n\}$ in $\mathfrak{S}(\R^d)$ is also a translation bounded measure. Then for all $f\in \mathfrak{S}(\R^d)$,
\[\mu\econv\widetilde{\mu}\,(f) = \lim_{n\to\infty} \mu_n\!\econv\widetilde{\mu_n} \,(f) \,. \]
\end{conjecture}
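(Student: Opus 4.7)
The plan is to prove consistency in three stages: establish uniform control on the autocorrelations $\gamma_n := \mu_n \econv \widetilde{\mu_n}$, extract a weak-$\ast$ cluster point, and identify that cluster point with $\mu \econv \widetilde{\mu}$ via a careful exchange of limits.

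First, I would invoke \cite[Proposition 2.2]{hof95} to conclude that each $\gamma_n$ is translation bounded with a constant depending only on the uniform translation-boundedness constant of the family $\{\mu_n\}$. Since translation bounded measures with a common bound form a vaguely compact set, any subsequence of $\{\gamma_n\}$ has a further subsequence converging vaguely to some translation bounded measure $\gamma^{\ast}$, and the uniform bound together with the Schwartz decay of $f \in \mathfrak{S}(\R^d)$ lets us promote vague convergence to $\gamma_{n_k}(f) \to \gamma^{\ast}(f)$ by a standard truncation-and-tail argument.

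Next, for each radius $R$ with $|\mu|(\partial B_R) = 0$ (which excludes at most countably many $R$), the weak-$\ast$ convergence $\mu_n \to \mu$ in $\mathfrak{S}(\R^d)^\prime$ gives
\[\frac{(\mu_n|_R \ast \widetilde{\mu_n|_R})(f)}{\lambda(B_R)} \;\xrightarrow{n\to\infty}\; \frac{(\mu|_R \ast \widetilde{\mu|_R})(f)}{\lambda(B_R)}.\]
Sending $R \to \infty$ on the right-hand side returns $\mu \econv \widetilde{\mu}\,(f)$, while sending $R \to \infty$ first (before the $n$-limit) returns $\gamma_n(f)$. So the heart of the argument reduces to interchanging the limits in $n$ and $R$ applied to $\tfrac{1}{\lambda(B_R)}(\mu_n|_R \ast \widetilde{\mu_n|_R})(f)$, and thereby to identifying $\gamma^{\ast}(f) = \mu \econv \widetilde{\mu}\,(f)$.

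The hard part will be justifying that exchange. The natural tool is a uniform-in-$n$ rate estimate: for every $\varepsilon > 0$, a single $R_0$ should serve for all $n$, in the sense that
\[\left|\gamma_n(f) - \frac{(\mu_n|_R \ast \widetilde{\mu_n|_R})(f)}{\lambda(B_R)}\right| < \varepsilon \qquad \text{whenever } R \geq R_0.\]
Such uniformity is not automatic from uniform translation boundedness alone; morally, it demands that the members $\mu_n$ share a common rate at which volume-averaging captures their autocorrelation, i.e.\ a common amenability rate. The Schwartz decay of $f$ helps control the surface-layer terms in $\mu_n|_R \ast \widetilde{\mu_n|_R}$ near $\partial B_R$, but the fluctuations of the averaged convolution about its Eberlein limit are what must be tamed uniformly in $n$. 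A possibly cleaner route is to pass through the diffraction side, using that each $\widehat{\gamma_n}$ is a positive tempered measure and invoking the Bochner--Schwartz characterisation to identify $\widehat{\gamma^{\ast}}$; then Fourier inversion and uniqueness of the Eberlein autocorrelation would close the loop. Whichever way one proceeds, the technical nub is the joint control of the two limits, and I suspect the conjecture as stated may require some strengthening of the hypotheses (beyond mere uniform translation boundedness) to secure the uniformity required.
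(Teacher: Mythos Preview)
The statement you are attempting to prove is labelled a \emph{Conjecture} in the paper, and the paper offers no proof of it whatsoever; the surrounding text reads ``If this were true, we would have the following.'' Consequently there is nothing in the paper to compare your proposal against: the author explicitly leaves this open.

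As to the substance of your sketch, you have correctly isolated the crux --- the interchange of the $n\to\infty$ and $R\to\infty$ limits in the volume-averaged convolutions --- but you have not closed it, and indeed you say so yourself. Your appeal to \cite[Proposition~2.2]{hof95} for uniform translation boundedness of the $\gamma_n$ is sound, and the vague-compactness extraction of a cluster point $\gamma^\ast$ is routine. But the step from weak-$\ast$ convergence $\mu_n\to\mu$ in $\mathfrak{S}(\R^d)'$ to convergence of the \emph{restricted} convolutions $\mu_n|_R\ast\widetilde{\mu_n|_R}$ already needs care (weak-$\ast$ convergence of tempered distributions does not in general pass through multiplication by the discontinuous cutoff $1_{B_R}$, and your parenthetical about $|\mu|(\partial B_R)=0$ addresses only the limit measure, not the approximants). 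More fundamentally, the uniform-in-$n$ rate estimate you posit is precisely what is not known to follow from the stated hypotheses; your own closing remark that the conjecture ``may require some strengthening of the hypotheses'' is an honest acknowledgement that the argument is incomplete. The alternative Fourier-side route you mention runs into the same wall: identifying $\widehat{\gamma^\ast}$ with $\widehat{\mu\econv\widetilde{\mu}}$ again requires some form of uniform control that the hypotheses do not obviously provide. In short, your proposal is a reasonable outline of where the difficulty lies, but it is not a proof, and the paper does not claim to have one either.
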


If this were true, we would have the following.

\begin{definition}
Let $\{\mu_n\}\subseteq \mathfrak{S}(\R^d)^{\prime}$ be a sequence of uniformly translation bounded measures, each possessing a natural autocorrelation, that converges weak-$\ast$ to the tempered distribution $T$. Then the {\em autocorrelation} of $T$ is the tempered distribution $T \econv \widetilde{T}$ defined by
\[ T \econv \widetilde{T}\,(f) := \lim_{n\to\infty}\,\mu_n\!\econv\widetilde{\mu_n}\, (f) \]
for all $f\in \mathfrak{S}(\R^d)$.
\end{definition}

To adapt the framework of \cite{lenzmoo} to include this class of objects, we would begin by making a small alteration to Definitions \ref{def ssp} and \ref{def auto-mom}.

\begin{definition}\label{def ssp new}
Let $(X,\mu)$ be a probability space, $T$ a $\mu$-invariant action of $\R^d$ on $X$ that extends to a strongly continuous unitary representation 
of $\G$ on $L^2(X,\mu)$ via $T_tf(x) = f((-t)x)$, and $N: \mathfrak{S}(\R^d)\to L^2(X,\mu)$ a linear $\R^d$-map such that for all $f\in \mathfrak{S}(\R^d)$, $N(\overline{f}) = \overline{N(f)}$. 
Then the quadruple $\mathcal{N}:= (N,X,\mu,T)$ is called a {\em well-tempered spatial stationary process} on $\R^d$. A spatial stationary process is called {\em ergodic}
if the eigenspace of $T$ for the eigenvalue $0$ is one dimensional and {\em full} if the algebra generated by functions of the form $\psi\circ N(f)$, $\psi\in C_c(\C)$, $f\in \mathfrak{S}(\R^d)$, is dense in $L^2(X,\mu)$. 
\end{definition}

\begin{definition}\label{def auto-mom new}
A well-tempered spatial stationary process $\mathcal{N} = (N,X,\mu,T)$ on $\R^d$ is said to possess an {\em autocorrelation} if and only if there exists a tempered
distribution $\gamma$ on $\G$ such that $\gamma(f\ast \widetilde{g}) = \<N(f),N(g)\>$ for all $f,g\in \mathfrak{S}(\R^d)$. For such a process $\mathcal{N}$, the 
{\em diffraction} of $\mathcal{N}$ is the Fourier transform of $\gamma$, the measure $\widehat{\gamma}$.
\end{definition}

Note that $\widehat{\gamma}$ is a positive tempered measure by the Bochner-Schwartz Theorem, as by the definition, the autocorrelation $\gamma$ is positive definite.

Definitions are all very well and good, but they will only be useful if we can show that the rest of the framework of \cite{lenzmoo} will adapt on through. However, an inspection of the relevant results suggests that this should not pose too much of a problem. The results of \cite[Section 3.1]{lenzmoo} are true with $\mathfrak{S}(\R^n)$ in place of $C_c(\R^n)$ due to the fact that $\mathfrak{S}(\R^n)$ is dense in $C_c(\R^n)$ and in $L^2(\R^n)$. To show that the diffraction to dynamics map of \cite[Section 3.2]{lenzmoo} exists, one requires that $\mathfrak{S}(\R^n)$ be dense in $L^2(\R^n,\widehat{\gamma})$ for a given diffraction measure $\widehat{\gamma}$. This is certainly true if $\widehat{\gamma}$ is a translation bounded measure, which suggests that we may need to restrict to diffraction measures that are translation bounded. From a physical perspective, this seems quite reasonable. Doing this would give us the remainder of the results in \cite[Section 3.2]{lenzmoo}, which in turn are the basis for 
Theorems \ref{thm diffphase} and \ref{thm construct}, that is, the characterisation and construction results that we would like to use. 

It certainly seems feasible, then, to widen the framework of spatial stationary processes to include those with tempered distributions as autocorrelations. This would allow the consideration of a tempered distribution such as $\widehat{\delta}_L$ within the framework, and would presumably also open the door to the construction of even more interesting and unexpected objects. Although we have gained a little insight into what kinds of objects can display a pure point diffraction, and just how large a diffraction solution class can really be, we are still far from a complete answer to Bombieri and Taylor's question \cite{bombtay} of ``which distributions of matter diffract?''.

\end{document}